\documentclass[conference]{IEEEtran}
\usepackage[T1]{fontenc}
\IEEEoverridecommandlockouts

\usepackage{graphicx}
\usepackage{physics}
\usepackage{amssymb}
\usepackage{amsthm}
\usepackage{hyperref} 
\usepackage{algorithm}
\usepackage{xcolor}
\usepackage{dsfont}
\usepackage{algpseudocode}
\usepackage[parfill]{parskip}
\usepackage[normalem]{ulem}
\usepackage{etoolbox}
\usepackage{subcaption}
\usepackage{mathtools}
\usepackage{braket}
\usepackage{tikz}
\usepackage{enumitem}
\usetikzlibrary{quantikz2, positioning,fit,backgrounds}

\apptocmd{\thebibliography}{\small}{}{}

\definecolor{GRAblue}{HTML}{045275} 
\definecolor{GRgreen}{HTML}{089099}
\definecolor{GRlightgreen}{HTML}{7CCBA2}
\definecolor{GRorange}{HTML}{F0746E}
\definecolor{GRpink}{HTML}{DC3977} 

\MakeRobust{\Call}

\newtheorem{theorem}{Theorem}
\newtheorem{definition}[theorem]{Definition}
\newtheorem{proposition}[theorem]{Proposition}

\theoremstyle{definition}

\newtheorem{example}[theorem]{Example}

\begin{document}

\title{Approximate Sparse State Preparation with the Grover--Rudolph Algorithm}

\author{
\IEEEauthorblockN{
Debora Ramacciotti\IEEEauthorrefmark{1},
Martin Steinbach\IEEEauthorrefmark{1},
Bence Temesi\IEEEauthorrefmark{1},
Andreea-Iulia Lefterovici\IEEEauthorrefmark{1},
Antonio F.\ Rotundo\IEEEauthorrefmark{2}
}

\IEEEauthorblockA{
\IEEEauthorrefmark{1}Institut f\"{u}r Theoretische Physik\\
Leibniz Universit\"{a}t Hannover, Hannover, Germany\\
Email: $\{$debora.ramacciotti, martin.steinbach, bence.temesi, andreea.lefterovici$\}$@itp.uni-hannover.de
}

\IEEEauthorblockA{
\IEEEauthorrefmark{2}Fermioniq B.V.\\
Amsterdam, The Netherlands\\
Email: af.rotundo@gmail.com
}
}

\maketitle

%\begin{abstract}

% Efficient quantum state preparation is a requirement for implementing quantum algorithms on early fault-tolerant quantum hardware. %In this context, s
% Sparse states are particularly of interest as they appear in applications ranging from simulation to optimization.
% The goal of this paper is to study approximate quantum state preparation for sparse states. 
% First, we introduce a hybrid strategy that combines two different approaches:
% (i) we decompose the rotation gates one by one or (ii) we group the rotation gates into blocks of controlled uniform rotations. We select either (i) or (ii) depending on which one is more cost efficient.
% This small optimization strategy already brings a $90\%$ reduction in the number of gates for states with $10^{-4}$ sparsity percentage.
% Second, we allow merging of gates that would add a controlled error in the prepared state.
% We derive %explicit 
% lower bounds on the resulting overlap with the target state. These bounds can be efficiently evaluated classically and used to determine the optimal gate merging.
% %Our results show $\dots$ \textcolor{red}{waiting for the new plots}
% \end{abstract}

\begin{abstract}
Sparse quantum state preparation is a common subroutine in quantum algorithms, where classical data with few nonzero entries must be loaded into a quantum state. 
In this work, we consider the Grover--Rudolph algorithm, which has recently been shown to efficiently prepare sparse states, and we propose two improvements.
First, we extend an existing gate-merging procedure by allowing rotations to merge with virtual zero-angle gates on unreachable branches of the preparation tree, reducing the number of CNOTs and control qubits. 
Second, we introduce an approximate variant in which rotations with similar but not identical angles are merged at the cost of a small, controllable error in the prepared state. 
We derive a classically computable estimate of the resulting overlap with the target state, which is used to guide the merging decisions.
    %The goal of this paper is to optimize the Grover-Rudolph state preparation algorithm for sparse states. First, we introduce an exact optimization that, by merging gates, achieves a $\mathbf{99.9\%}$ reduction in the CNOT count for states with a sparsity percentage of $\mathbf{D\approx 10^{-5}}$. We then introduce a small error in the preparation to further reduce the number of CNOTs, while maintaining an overlap above a certain threshold with the desired state. Thanks to this approximation, we achieve a further $\mathbf{20-30\%}$ reduction in the CNOT count compared to the previous approach.
\end{abstract}

\begin{IEEEkeywords}
sparse state preparation, approximate sparse state preparation, grover-rudolph
\end{IEEEkeywords}

%\tableofcontents

\section{Introduction}
Given a vector $\psi$, the goal of quantum state preparation (QSP) is to build a unitary $U$ such that  $U\ket{0}=\ket{\psi}$, where $\ket{\psi}$ is a quantum state whose amplitudes are equal to $\psi$ up to normalization. 
Many quantum algorithms begin by loading classical data into the amplitudes of a quantum state, making QSP an essential subroutine in quantum computing.

The complexity of preparing an unstructured quantum state scales exponentially in the number of qubits \cite{knill1995approximation, zalka1998simulating, kaye2001quantum,grover2002creating, sun2023asymptotically}, making preparation infeasible for all but small systems.
Therefore, many QSP algorithms exploit structure in the target state to achieve lower complexity.
In this paper, we focus on sparse states, i.e.,\ states where most computational basis amplitudes are zero.
More precisely, let $n$ be the number of qubits and $d$ the number of nonzero amplitudes; we focus on the regime $d\ll 2^n$. 

Many QSP algorithms tailored to sparse states have been proposed in recent years \cite{gleinig2021efficient, malvetti2021quantum, de2022double, mozafari2022efficient, zhang2022quantum, ramacciotti2024simple, mao2024optimalcircuitsizesparse, rupprecht2026sparse}.
It has been shown that sparse QSP admits a trade-off between gate count (time complexity) and ancilla count (space complexity) \cite{zhang2022quantum, li2024nearly}.
For example, the optimal circuit depth, $\Theta(\log(nd))$, can be achieved using $\mathcal{O}(nd\log d)$ ancillas \cite{zhang2022quantum}.
Here, we focus on the opposite regime, with only $\Theta(n)$ ancillas, which is the most relevant for early fault-tolerant quantum computers \cite{preskill2025beyond, eisert2025mind}.

Among $\Theta(n)$-ancilla algorithms, the Grover--Rudolph \cite{grover2002creating} algorithm offers a particularly simple baseline.
Although originally designed to load unstructured states or discretized efficiently-integrable distributions, it has recently been shown to be efficient for preparing sparse states \cite{ramacciotti2024simple}.
The algorithm prepares a state by applying layers of controlled rotations; for sparse states, only $\mathcal{O}(nd)$ such rotations are required.
Despite this favourable scaling, the resulting circuits remain expensive in absolute terms.
We improve this algorithm along two directions.

First, we develop an optimization routine that reduces the number of gates required to implement the Grover--Rudolph algorithm.
The core idea, already explored in \cite{ramacciotti2024simple}, is to merge rotations whose control patterns are neighboring in Hamming distance, provided that doing so leaves the prepared state unchanged.
This reduces both the number of controlled rotations and the number of control qubits. 
The main new contribution of this paper is that, compared to \cite{ramacciotti2024simple}, we %additionally 
allow merges between an isolated rotation and a neighboring virtual zero-angle rotation on an unreachable branch.
This does not reduce the number of rotations, but it does reduce the number of control qubits; effectively this is a control stripping move.
Empirically, we find that this type of merge is very common for sparse states and significantly reduces the number of CNOT gates.
For example, for random states with $d/2^n \approx 10^{-5}$, the full procedure reduces the CNOT count by $90\%$ compared to the unoptimized Grover--Rudolph circuit.

Second, we extend the approximate QSP algorithm of \cite{marin2023quantum} to sparse states. 
In approximate QSP, the gate count is reduced by allowing for a small error in the prepared state \cite{zulehner2020approximation, Rofougaran_2025, RINDELL2023128860, zylberman2025efficient, 10190145}. 
In particular, \cite{marin2023quantum} shows that for amplitudes sampled from a smooth function on a uniform grid, the CNOT count can be reduced to a value asymptotically independent of $n$ at the cost of a small square-root infidelity.
Sparse states do not satisfy this smoothness condition, as their amplitudes correspond to highly discontinuous functions on the uniform grid.
Therefore, they require a different approximation mechanism based on the support structure. 
We fill in this gap by allowing bounded approximation errors in the merge optimisation introduced above. 
To do so, we derive an estimate of the overlap between the exact target state and the running approximate state at each step of the merging procedure, which can be efficiently evaluated classically and used to decide which merges to perform. 
Empirically, we find that this leads to an additional $20$--$30\%$ reduction in the CNOT count. 

\emph{Related work.} The approach closest to ours is that of Zulehner et al.~\cite{zulehner2020approximation}: a quantum state is approximated by reducing its decision-diagram representation before synthesis, pruning low-contribution nodes while controlling the fidelity loss. Our construction differs in that it acts directly on the rotation angles of the Grover--Rudolph circuit rather than on a separate state representation.

The paper is organized as follows. In Sec.~\ref{sec:preliminaries} we review the Grover--Rudolph algorithm and the merging method from \cite{ramacciotti2024simple}. 
In Sec.~\ref{sec:exactoptimization} we introduce an exact optimisation method and compare its CNOT count to the standard Grover--Rudolph implementation. 
In Sec.~\ref{sec:approx} we present the approximate algorithm, derive the overlap estimate, and compare its CNOT count to that of the exact algorithm of Sec.~\ref{sec:exactoptimization}.
\section{Preliminaries}
\label{sec:preliminaries}
\subsection{Grover--Rudolph Algorithm}\label{subsection:GroverRudolph}

\begin{figure}
    \centering
    \includegraphics[width=0.7\linewidth]{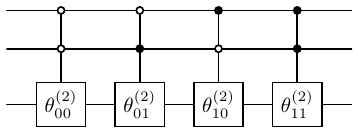}
    \caption{Circuit diagram of a uniformly controlled y-rotation on three qubits. We indicate the $R_y$ rotation just by the applied angle to simplify the notation.}
    \label{circuit:uniformly_controlled_rotation}
\end{figure}
%Let $\psi \in \mathbb{C}^{N}$ be a classical vector.
%The goal of a state preparation algorithm is to construct a unitary $U_{\psi}$, such that $U_{\psi} \ket{0} = \ket{\psi}$, where $\ket{\psi}$ is a quantum state with amplitudes equal to $\psi$, up to the normalisation constant.
%One method to prepare such a state is the Grover-Rudolph algorithm, which constructs a series of coarse-grained versions of $\psi$, recursively preparing them using controlled rotations.
The Grover--Rudolph algorithm \cite{grover2002creating} prepares the state $\ket{\psi}$ by recursively constructing a series of coarse-grained versions of $\ket{\psi}$ using controlled rotations.
In this paper, we only consider the real positive case of $\psi \in \mathbb{R}_+^{N}$.
More precisely, we want to prepare
\begin{equation*}\label{eq:state_to_prepare}
    U_{\psi}\ket{0}= \frac{1}{\norm{\psi}}\sum_{i_1\dots i_n}\psi_{i_1\dots i_n}\ket{i_1 \dots i_n},
\end{equation*}
where the indices $i_k$ take values in $\{0,1\}$, $\norm{\psi}$ is the 2-norm of the vector, and $N=2^n$ is the total number of summation terms. We assume \(\norm{\psi}=1\) in the rest of the paper.

The coarse-grained states, denoted $\psi^{(k)}$ for $k=1,\dots, n$, have components
\begin{equation}\label{eq:coarse_grained_state}
\psi_{i_1 \dots i_{k}}^{(k)} = 
    \sqrt{(\psi^{(k+1)}_{i_1 \dots i_{k} 0})^2 + (\psi^{(k+1)}_{i_1 \dots i_{k} 1})^2}\,,
\end{equation}
for $k<n$, while $\psi^{(n)}\equiv \psi$.
The superscript $(k)$ keeps track of the number of qubits required to encode $\ket{\psi^{(k)}}$, the quantum state having amplitudes $\psi_{i_1 \dots i_{k}}^{(k)}$.
Notice that $\psi^{(0)}=1$ by normalization.
The state $\ket{\psi^{(k)}}$ is prepared starting from $\ket{\psi^{(k-1)}}$ by appending an ancilla qubit $\ket{0}$ and performing a rotation of the form
\begin{equation}
  \begin{aligned}\label{eq:recursion}
    &\ket{{i_1 \dots i_k}}\ket{0} \rightarrow \\
    &\ket{{i_1 \dots i_k}}  
    \Bigl(\cos{\frac{\theta_{i_1 \dots i_k}^{(k)}}{2}}\ket{0} +  
    \sin{\frac{\theta_{i_1 \dots i_k}^{(k)}}{2}} \ket{1} 
    \Bigr).
\end{aligned}  
\end{equation}
\begin{figure}
    \centering
    \includegraphics[width=\linewidth]{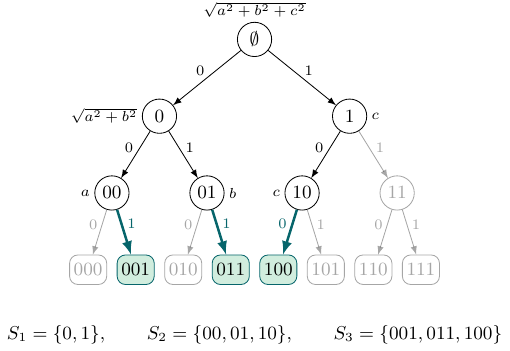}
\caption{
Preparation tree for the state \(\ket{\psi}=a\ket{001}+b\ket{011}+c\ket{100}\).
The highlighted leaves are the nonzero basis states of the target state, while faded nodes indicate zero-amplitude branches. The non-faded nodes correspond to the nonzero prefix sets \(S_k\). The coarse-grained state $\ket{\psi^{(2)}}$, for instance, is given by $a\ket{00} + b\ket{01} + c \ket{10}$.
}
\label{fig:prep_tree_sparse_example}
\end{figure}
The angles should be chosen such that
\begin{equation}\label{eq:angles_phases}
    \theta_{i_1 \dots i_k}^{(k)} = 2 \arccos{\frac{\psi^{(k+1)}_{i_1 \dots i_{k} 0}}{\psi^{(k)}_{i_1 \dots i_k}}}\,,
\end{equation}
where if \(\psi^{(k)}_{i_1 \dots i_k}=0\) the angle may be set to \(0\) by convention.
The superscript of the angles $(k)$ corresponds to the number of control qubits in the rotation, and the subscript $i_1 \dots i_k$ to the control pattern.

In the case of sparse vectors, we assume that we know the number of nonzero entries in $\psi$ and their locations. 
The rotation angles can be found by exploiting the sparsity of the vector, as shown in Algorithm 3 in \cite{ramacciotti2024simple}.
Since each $\psi^{(k)}$ has at most $d$ nonzero elements, we find that each unitary implementing $U$ has at most $d$ nonzero entries.

For dense states, instead, all angles in \eqref{eq:angles_phases} are generally nonzero.
In this case, each layer of gates preparing the coarse grained states consists of a so-called uniformly controlled rotation ($\mathrm{UCR}^{(k)}$) over the $y$-axis \cite{mottonen2004transformation}. 
A $\mathrm{UCR}^{(k)}$ is a series of multiple rotations that act on the $k$-th qubit, controlled by all possible combinations of controls on the $k$ qubits, as depicted in Fig.~\ref{circuit:uniformly_controlled_rotation}.
The structure of a $\mathrm{UCR}^{(k)}$ allows it to be decomposed very efficiently, using only $2^{k}$ CNOTs. Furthermore, a state can be prepared with only $2^n$ CNOTs by employing uniformly controlled rotations \cite{mottonen2004transformation, Shende_2006}, using a different implementation from the Grover-Rudolph one.

Eq. \eqref{eq:coarse_grained_state} defines a preparation tree. 
As an example, consider the state
\begin{equation}\label{eq:ex_state}
\ket{\psi}=a\ket{001}+b\ket{011}+c\ket{100},
\end{equation}
whose preparation tree is shown in Fig.~\ref{fig:prep_tree_sparse_example}.

The probabilities related to each coarse-grained state can be written both in terms of the amplitudes and in terms of the angles.
For a bit string $i_1\dots i_k\in\{0,1\}^k$, let
$P_{i_1\dots i_k}^{(k)} = (\psi_{i_1\dots i_k}^{(k)})^2$
denote the probability of reaching the node labelled by $i_1\dots i_k$ at depth $k$ in the preparation tree. 
Equivalently, using \eqref{eq:recursion}, the same probability is obtained from the rotation angles along the path:
\begin{equation*}\label{eq:prefix_probability_angles}
    P_{i_1\dots i_k}^{(k)}
    =
    \prod_{l=1}^{k}
    f\!\left(\theta_{i_1\dots i_{l-1}}^{(l-1)},\, i_l\right),
\end{equation*}
where
\begin{equation*}
    f(\theta,0)=\cos^2\!\frac{\theta}{2},
    \qquad
    f(\theta,1)=\sin^2\!\frac{\theta}{2}.
\end{equation*}

In the following, we use the notation: for the $k$-th layer, $S_k \subseteq \{0,1\}^k$ denotes the set of computational basis prefixes that occur with nonzero amplitude before the rotations of that layer are applied. Equivalently, $s \in S_k$ if and only if the coarse-grained amplitude $\psi_s^{(k)}$ is nonzero.

\subsection{Merging Optimization}
\label{subsec:merging}
One can use a simple optimization strategy to reduce the number of controlled rotations used in the Grover--Rudolph algorithm.
Consecutive gates that have the same rotation angles and whose controls differ only by one bit-flip can be merged without changing the prepared state.

We show a simple example in Fig.~\ref{fig:merging_example}.
The first gate is conditioned on `11' and performs a rotation with an angle $\theta$, while the second gate is conditioned on `10' and executes a rotation with the same angle $\theta$. Given that the gates differ by only one control and have the same rotation angles, they can be combined into a single gate, controlled on the first qubit being in `1'. Let us introduce the notation of the `e' control, which simply means that, in that position, no control is applied. Thus, the merging of `11' and `10' would be indicated by `1e'.

This leads not only to a reduction in the number of gates, but also to gates with fewer control qubits. 
The implementation is explained in detail in \cite{ramacciotti2024simple}, and has a classical cost of $O(dn^2\log d)$.

In Sec.~\ref {sec:exactoptimization}, we strengthen this optimization procedure by exploiting the support structure of the preparation tree, which allows us to remove some controls.

\begin{figure}
    \centering
\includegraphics[scale=0.85]{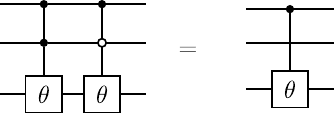}
    \caption{Example of optimization procedure. Here, $\theta^{(2)}_{11} = \theta^{(2)}_{10} = \theta^{(2)}_{1e} = \theta$.}
    \label{fig:merging_example}
\end{figure}

\section{Exact Optimization}
\label{sec:exactoptimization}

In this section, we introduce an exact optimization strategy that reduces the number of required CNOTs by merging some gates and combining this %decomposition 
approach with the one of uniformly controlled rotations (UCR). 

\subsection{Exact merges}
\label{subsec:merge&strip}
We will allow two kinds of merges:
\begin{enumerate}
    \item[(i)] \textit{neighboring merges}: a merging of two gates of equal angle on the same layer, whose control patterns differ only in one bit (see Sec. \ref{subsec:merging}) %and with the same angles, as explained in Sec. \ref{subsec:merging};
    \item[(ii)] \textit{Control stripping}: a stripping of one control, that is, a merge with a \textit{virtual zero gate}\footnote{A virtual zero gate is a gate with a zero rotation angle that will not be applied in the circuit in practice.}, %i.e., 
    on an unreachable branch of the preparation tree.
\end{enumerate}

\begin{example}[Stripping]
    %Let us take, for instance, the preparation of the state
    Consider again the state \eqref{eq:ex_state}
    whose preparation tree is shown in Fig. \ref{fig:prep_tree_sparse_example}. A possible stripping of a control can be performed on the node $x=01$, which %would 
    becomes $z = e1$, and keeps its old angle $\theta_{01}^{(2)}$. %Note how 
    This is the same as merging $x$ with the node $x'=11$, %which is 
    located on an unreachable branch on the tree, i.e., on a branch with zero amplitude.  
\end{example}
This kind of merging, with a gate on an unreachable branch, is always exact.
To formalize this, we introduce the following definition.
\begin{definition}
    For a control pattern $x \in \{0, 1, e\}^k$, the set of bitstrings consistent with $x$ is $$B(x) := \{b \in \{0, 1\}^k : b_t = x_t,\;\forall\, t \;\textup{with}\; x_t \neq e \},$$ %Its cardinality is 
    where $|B(x)| = 2^m$, %with 
    and $m = |\{t : x_t = e\}|$.
\end{definition}
In simple terms, $B(x)$ is the set given by replacing the empty control $e$ with either $0$ or $1$. %, taking into account all the possibilities.
For instance, $B(0e) = \{ 00, 01 \}$.

 Let $x\in\{0,1,e\}^k$ be the control pattern of a gate with angle $\theta^{(k)}_x$, and let $z$ be obtained from $x$ by replacing one non-empty control with $e$; that is, $z$ represents a possible merging candidate. 
 We denote by $x'$ the complementary pattern obtained by flipping that control and obtain %Then
\begin{equation*}
    B(z)=B(x)\cup B(x').
\end{equation*}
If the branch of $x'$ is unreachable, that is,
\begin{equation} \label{eq:support}
    S_k\cap B(x')=\varnothing,
\end{equation}
then performing 
the control stripping represented by $z$, without changing the angle $\theta^{(k)}_x$, leaves the prepared state unchanged.

In fact, the coarse-grained state at layer $k$ is
\begin{equation*}
    \ket{\psi^{(k)}} = \sum_{b \in \{0,1\}^k} \alpha_b \ket{b},
\end{equation*}
where $\alpha_b := \langle b | \psi^{(k)} \rangle$.
Thus, when $b \in B(x)$, nothing changes since the angle stays the same, and when $b \in B(x')$, the value of the angle is not relevant, since $\alpha_b = 0$.

\subsection{The Algorithm}
The exact optimization at each layer is divided into three steps:
\begin{enumerate}
    \item[(1)] For each gate, strip controls \emph{sequentially} (as explained below) as long as \eqref{eq:support} holds for the current key;
    \item[(2)] Merge neighboring gates with the same angle, until no other neighbor is found;
    \item[(3)] Count the %number of 
    CNOTs of each controlled rotation singularly. Apply the merging optimization only if the total number of CNOTs for the single rotations is smaller than $2^{k}$ (the number of CNOTs %gates 
    for a $\mathrm{UCR}^{(k)}$).
\end{enumerate}

We computed the number of CNOTs for a single rotation following \cite{vale2023decomposition}:
\begin{equation*}
N_{\text{CNOT}}^{\text{single}} =
\begin{cases}
0, & \text{if } N_{\text{ctrl}} = 0, \\[6pt]
2, & \text{if } N_{\text{ctrl}} = 1, \\[6pt]
16\,N_{\text{ctrl}} - 24, & \text{if } N_{\text{ctrl}} > 1~,
\end{cases}
\end{equation*}
where $N_{\text{ctrl}}\leq k$ is the number of controls. %and it is smaller than or equal to $k$.

Note that it is not necessary to reiterate this procedure, that is, performing again step (1) after step (2), since a pair merge cannot create a new control stripping opportunity: to see this, let $l$ and $r$ be arbitrary bit strings, and suppose that the rotation gate corresponding to the control patterns
    $x = l0r$, $y = l1r$,
have the same angle. Then, $x$ and $y$ can be merged into $z=ler$. Then,
    $B(z) = B(x) \cup B(y)$. \newline
% If $z$ is strippable from a control again, then there is a newly added branch from $z$ that is unreachable, but it
% %But then That branch 
% also must be unreachable from $x$ and $y$, i.e.\ the stripping could have already happened. %before.
If $z$ is strippable from a control again, then the complementary region added by this stripping has zero support. This
region is the union of the corresponding complementary regions for \(x\) and
\(y\). Hence, each of these complementary regions already had zero support
before the neighboring merge, and the corresponding stripping could already
have been performed in step (1).

However, it is not possible to compute \textit{all} the possible controls that can be stripped, and strip them all at once in step (1). We need to do this \textit{sequentially}. In fact, let us suppose that two zero support strippings are possible from $x=001$:
\begin{align*}
    &z_1 = e01, \qquad B(z_1) = \{001, 101\}, \\
    &z_2 = 0e1, \qquad B(z_2) = \{001, 011\},
\end{align*}
that is, both $101$ and $011$ are unreachable. If we want to strip them at the same time, then we %would 
get 
\begin{equation*}
    z = ee1, \qquad B(z) = \{001,011,101,111\},
\end{equation*}
which would imply that $111$ is also unreachable, which is generally not true.

\subsection{Classical Complexity}
As mentioned in Sec.~\ref{subsec:merging}, step (2) has a classical complexity $\mathcal{O}(d n^2 \log d)$, while step (3) only requires a constant number $\mathcal{O}(1)$ of comparisons between CNOT counts and the UCR count at each layer. Let's %now 
estimate the cost of step (1).

At layer $k$, there are at most $d$ nonzero rotation gates, and, for each gate, we check at most $k$ non-empty controls. Moreover, the stripping is performed sequentially: after one accepted removal, the current key changes, and the next support condition must be recomputed. In the worst case, this leads to $\mathcal{O}(k^2)$.
checks for each gate.

In the sparse regime, the support set $S_k$ contains at most $d$ elements. 
Checking whether
\(S_k\cap B(x')=\varnothing\) is done by scanning the sparse support and
testing control pattern compatibility. Since the elements of \(S_k\) are
\(k\)-bit strings, the cost of testing \eqref{eq:support} amounts to \(\mathcal{O}(dk)\). Taking into account the sequential procedure, the stripping cost for one gate is
\(\mathcal{O}(dk^3)\), and, since there are at most \(d\) gates at layer \(k\), the
total stripping cost at layer \(k\) is $\mathcal{O}(d^2 k^3)$.

Summing over all layers gives $\mathcal{O}(d^2 n^4$).
%\begin{equation*}
%    \sum_{k=1}^{n}\mathcal{O}(d^2 k^2)
%    =
%    \mathcal{O}(d^2 n^3).
%\end{equation*}
Therefore, the overall classical complexity of the exact optimization is $\mathcal{O}(d^2 n^4 + dn^2\log d).$
%\begin{equation*}
%    \mathcal{O}(d^2 n^3 + dn^2\log d).
%\end{equation*}

In practice, this estimate is often pessimistic. Many support checks terminate immediately, and repeated support queries can be stored. Nevertheless, the routine remains polynomial in both $d$ and $n$.

\subsection{Numerical Results}
First, we compared the CNOT reduction related to the merges. Fig.~\ref{fig:before_after_merge} shows the ratio between the CNOT count after the merging
procedure and the CNOT count without merges, both computed using the single-rotation decomposition.
When the sparsity percentage is about $10^{-3}$, we already achieve a CNOT reduction of about $50\%$, while for sparser states, e.g., $D = 10^{-5}$, the CNOT count is reduced by $90\%$. The increasing behavior of the number of merges with the sparsity is given by the fact that the sparser the state, the more branches of the preparation tree are unreachable, and then the more control strippings are possible. 

\begin{figure}
    \centering
    \includegraphics[width=\linewidth]{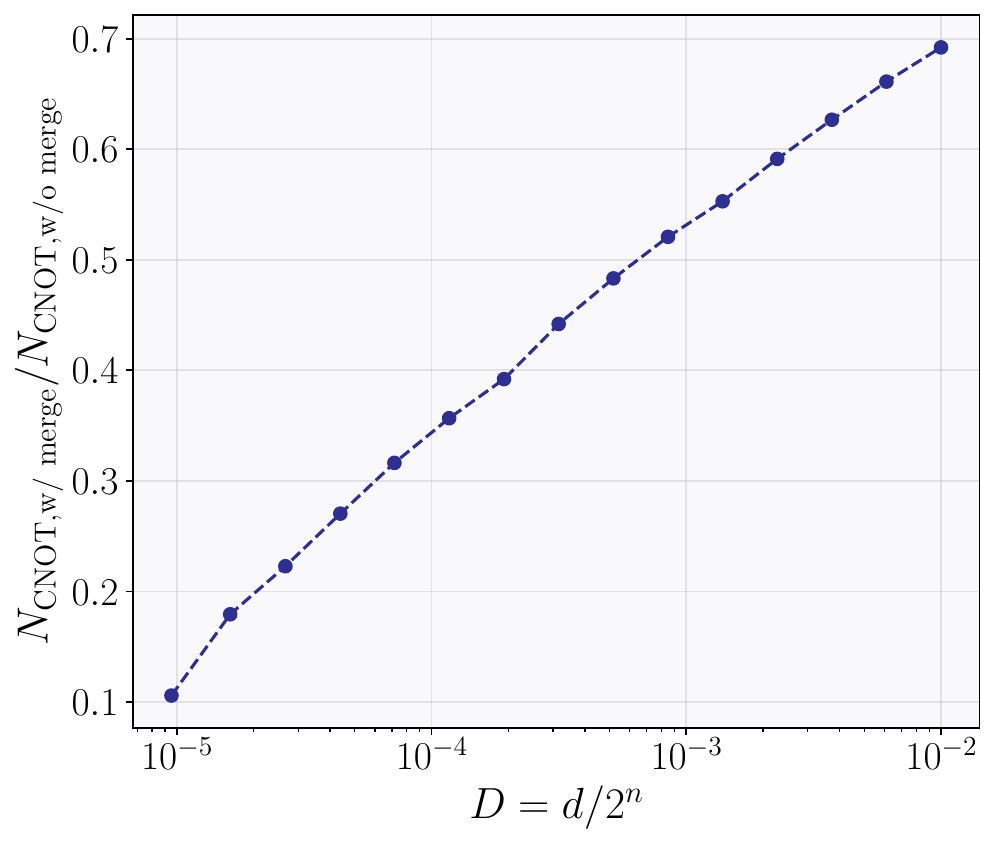}
    \caption{CNOT reduction due to the exact merges. We plot the ratio of the CNOT count after the merging procedure to that before the merging procedure, using the single-rotation decomposition in both cases, against the sparsity percentage $D =d / 2^n$. The number of qubits is fixed at $n=20$ qubits, and each point is averaged across 20 repetitions.}
    \label{fig:before_after_merge}
\end{figure}

Then, in Fig.~\ref{fig:cnot_comparison}, we compare the number of CNOTs required in the case of (i) uniform rotations (in blue), (ii) counting single rotations after the exact merges (in pink), and (iii) deciding layer by layer, either to apply the optimization or not (in green), as in the exact algorithm. 
For denser states ($D \approx 10^{-2}$), the exact algorithm approaches the UCR decomposition, while for sparser states, it approaches the one that decomposes rotations singularly. For a very sparse state, e.g., for $D \approx 10^{-5}$, we achieve a $99.9\%$ reduction compared to the UCR approach.

Note that our result could be further improved by using other decomposition algorithms that use ancillas, such as \cite{zindorf2025multi}, in which the authors claim that if we have all-to-all connectivity and $k \approx n$, there is an upper bound of $\approx 12 n$, where $n$ is the number of qubits.

\begin{figure}
    % D = d / 2^n explain sparsity percentage
    % change to #CNOT gates
    % use computer modern font and increase label size
    \centering
    \includegraphics[width=\linewidth]{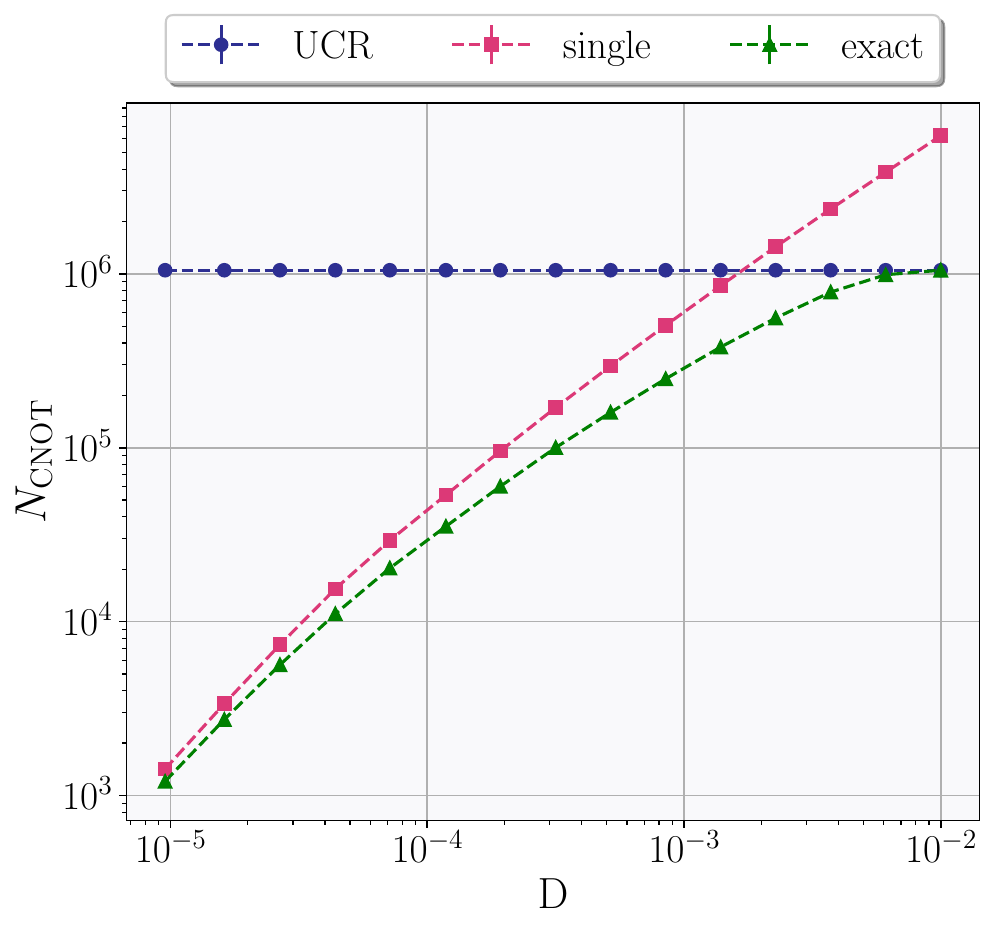}
    \caption{%Uniform rotation algorithm vs single merges depending on sparsity for $n=20$ number of qubits.
    Comparison between the uniform rotation algorithm (blue), single rotations after merges (pink), and the exact algorithm (green). We plot the CNOT gate count against the sparsity percentage $D =d / 2^n$, for simulation of $n=20$ qubits and average across 20 repetitions. %The number of CNOT gates is constant for the uniform rotation algorithm across all sparsity percentages, for the single rotations it increases linearly, and for the hybrid method it saturates to the same value as for the uniform method as D increases.
    }
    \label{fig:cnot_comparison}
\end{figure}
\section{The approximate Optimization}
\label{sec:approx}
%The purpose of the optimization is to further 
Our next goal is to reduce the number of gates by allowing a small error in the preparation process while maintaining an overlap with the target state above a prescribed threshold, which is specified as input to the algorithm. 

\subsection{Approximate merges}
In this section, we generalize the exact merges of Sec.~\ref{subsec:merge&strip} in order to allow for approximate merges. As before, we allow two types of moves:
\begin{enumerate}
    \item \textit{neighbor merges}: two rotations acting on the same layer whose control patterns differ in exactly one bit;
    \item \textit{Control stripping}: the removal of a non-empty control from an active gate, allowed only if the enlarged region contains only rotations of zero angles.
\end{enumerate}
In both cases, the new rotation angle of the resulting gate is, in general, different from the old angles and, therefore, introduces an approximation in the prepared state. The optimal value of the merged angle is given by \eqref{eq:theta_C}, and will be derived in the following section.

For control stripping, the flipped sibling pattern $x'$ is treated as a zero-angle gate. This remains true even if that sibling branch is reachable in the preparation tree and has a nonzero probability weight (the stripping would be exact otherwise). However, the move is allowed only if the enlarged region $B(x')$ contains only rotations of zero angles. 

\begin{example}[Not allowed Stripping]
Assume that at some layer the current active circuit contains the two gates
\[
0e0 \mapsto \theta_A,
\qquad
110 \mapsto \theta_B,
\qquad \theta_B\neq 0.
\]
Suppose we want to strip the first control from the gate \(0e0\). The resulting key would be $x = ee0$,
and the flipped sibling pattern is $x'=1e0$.
Then, the enlarged region corresponding to the new key \(ee0\) is
\[
B(ee0)=\{000,010,100,110\}.
\]
Inside this region, there is also another active gate, namely \(110\mapsto\theta_B\), thus the stripping is not allowed.
\end{example}
\subsection{The Algorithm}
Let us indicate with $\mathcal{F}$ the square-root fidelity, which, in the case of positive real states, equals the overlap
\begin{equation*}
    \mathcal{F}(\psi,\psi') = \langle \psi | \psi' \rangle,
\end{equation*}
and with $\mathcal{F}_{min}$ the minimum allowed overlap between the target and the prepared state, given as another input to the algorithm.\newline
The algorithm proceeds as follows:
\begin{enumerate}[label=(\arabic*)]
    \item Construct the exact Grover--Rudolph circuit and keep it as baseline data for evaluating the overlap estimator.
    \item Apply the improved exact merging procedure of Sec.~\ref{sec:exactoptimization} to obtain the active circuit.
    \item Divide the interval $[\,\mathcal{F}_{\min},1\,]$ into $M$ equally spaced thresholds
    \begin{equation*}
        \mathcal{F}^{(s)} = 1 - s\frac{1-\mathcal{F}_{\min}}{M},
        \qquad s=1,\dots,M.
    \end{equation*}
    \item For each threshold $\mathcal{F}^{(s)}$, generate all admissible candidate merges in the current active circuit.
    \item For every candidate, determine the set of baseline patterns absorbed by the candidate merge, compute the corresponding merged angle, and evaluate the post-merge overlap estimate.
    \item Order the admissible candidates by decreasing estimated post-merge overlap and process them greedily. Whenever a candidate is reached, recompute its overlap estimate using the current cluster state and accept it only if the updated estimate still satisfies $\mathcal{F}_{\mathrm{est}} \ge \mathcal{F}^{(s)}$.
    \item After the last threshold $\mathcal{F}_{\min}$ is reached, repeat the same greedy pass at fixed threshold $\mathcal{F}_{\min}$ until no further admissible merge exists.
\end{enumerate}

%Let us now clarify some of the steps.

In step (1) we build the \emph{baseline circuit}, which is the original exact Grover--Rudolph circuit, kept fixed throughout the algorithm. Its only purpose is to provide the probability weights used in the estimator. The \textit{active circuit} is the circuit currently being optimized by successive merges.

In step (3) we partition $[\,\mathcal{F}_{\min},1\,]$ into $M$ small intervals and optimize locally within each interval, because if we were to order them once and merge as long as our overlap estimation was less than (or equal to) than $\mathcal{F}_{\min}$, we would lose the opportunity to merge an already merged gate.\newline
\begin{example}
For example, suppose that a rotation has control pattern $11$ and that the neighboring branch $10$ is unreachable.
Then $11$ can be merged with a virtual zero-angle rotation on $10$, giving
$
    11 \longrightarrow 1e .
$
If the branch $0e$ is also unreachable, the newly created gate can then be stripped again,
$
    1e \longrightarrow ee .
$
However, this second move is not present in the original list of candidates, since the pattern $1e$ did not exist before the first merge.
Therefore, ordering the candidates only once may miss valid merges involving gates created during the optimization.
The local optimization over small overlap intervals avoids this problem by updating the dictionary and recomputing the available moves after each interval.
\end{example}

Step (5) requires a bookkeeping structure for repeated merges. To this end, each active gate carries the set of baseline control patterns that have been absorbed into it. We call this set its \emph{cluster}. Initially, every active gate forms its own cluster. Whenever two active gates are merged, their clusters are joined. In the case of a control stripping, the newly added sibling pattern is included in the cluster with baseline angle $0$. The role of these clusters in the overlap estimate will be made explicit in the following section.

\begin{example}[Clusters]
As an example, consider the second layer of the preparation tree in Fig. \ref{fig:prep_tree_sparse_example}, determined by the controls $00$, $01$, $10$, $11$. 
Initially, each active gate forms its own cluster, e.g., $C_{00} = \{00\}$.
If $10$ and $11$ are merged into the active gate $1e$, the new active cluster is
\[
    C_{1e}=\{10,11\}.
\]
If later the gate $1e$ is merged with $0e$ into $ee$, then the new cluster is obtained by joining the previous ones:
\[
    C_{ee}=C_{1e}\cup C_{0e}=\{00,01,10,11\}.
\]
\end{example}

In step (6), the admissible candidates are ordered by their estimated post-merge overlap. However, this ordering is only provisional: once a merge is accepted, the active circuit changes, and the overlap estimate of the remaining candidates may also change. For this reason, each candidate is re-evaluated when it is encountered during the greedy sweep and is accepted only if the updated estimate still satisfies the current threshold.

\subsection{Overlap Estimation}
\subsubsection{Single Merge}

We first derive the change in overlap produced by a single additional merge applied to an arbitrary active circuit. Thus, let \(G\) be the current active circuit at some stage of the approximate procedure, and let \(\widetilde G\) be the circuit obtained from \(G\) by performing one further merge. We denote the corresponding prepared states by
$
\ket{\phi}=G\ket{0}$,
$\ket{\widetilde\phi}=\widetilde G\ket{0}.
$
The formula below, therefore, gives the overlap between two consecutive states in the merging procedure (for both neighboring merges and strippings).

Assume that the merge is performed in layer \(k\). Let
\[
\ket{\phi^{(k)}}=\sum_{b\in\{0,1\}^k}\alpha_b\ket{b}
\]
be the coarse-grained state entering that layer in the current circuit \(G\) with coefficients $\alpha_b := \langle b | \phi^{(k)} \rangle$. For a control pattern \(x\in\{0,1,e\}^k\), define
\begin{equation*}
    P_x^{(G)}
    :=
    \sum_{b\in B(x)} |\alpha_b|^2 ,
\end{equation*}
the probability related to the control pattern $x$ of the current circuit $G$.
\begin{proposition}[Single-merge overlap]
\label{prop:single-merge}
Let \(x\) and \(y\) be the two control
patterns that are merged into an arbitrary angle $\theta_C$. Then, the overlap between the state before and after the merge is
\begin{equation*}
    \braket{\,\widetilde\phi|\phi\,}
    =
    1-
    \sum_{w\in\{x,y\}}
    \left(
    1-\cos\!\frac{\theta_w-\theta_C}{2}
    \right)
    P_w^{(G)},
\end{equation*}
\end{proposition}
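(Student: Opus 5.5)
We compare the two circuits layer by layer. Write $G = G_{>k}\,L_k\,G_{<k}$ and $\widetilde G = \widetilde G_{>k}\,\widetilde L_k\,G_{<k}$, where $L_k$ and $\widetilde L_k$ are layer $k$ before and after the merge. Layers below $k$ are untouched by the merge, so both circuits feed the same coarse-grained state $\ket{\phi^{(k)}}=\sum_b\alpha_b\ket{b}$ into layer $k$.

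The first thing to settle is the later layers, so that only layer $k$ contributes. I will take as the intended reading (consistent with the estimator being built from local weights $P^{(G)}_w$) that the subsequent layers act as the same controlled operation in both circuits, i.e.\ $\widetilde G_{>k}=G_{>k}$. This holds exactly if each later gate is a rotation conditioned on a prefix pattern that does not depend on the $(k{+}1)$-th qubit. In general, later gates are conditioned on the new qubit, and then the identity is really the overlap at layer $k+1$. I expect justifying this step, or recognising that the proposition should be read as the overlap of the $(k{+}1)$-qubit states produced at layer $k$, to be the main subtlety. I would state that assumption explicitly.

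Granting it, unitarity of $G_{>k}$ gives
\[
\braket{\widetilde\phi|\phi}=\bra{\phi^{(k)}}\bra{0}\,\widetilde L_k^{\dagger}L_k\,\ket{\phi^{(k)}}\ket{0}.
\]
Both $L_k$ and $\widetilde L_k$ are block diagonal in the computational basis of the first $k$ qubits. On $\ket{b}$ they act as $R_y(\theta(b))$ and $R_y(\widetilde\theta(b))$ on the new qubit, where $\theta(b)$ is the angle of the unique active gate whose pattern $u$ satisfies $b\in B(u)$, or $0$ if there is none. Uniqueness holds because the active gates of a layer have disjoint regions $B(u)$, which is preserved by both kinds of move.

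Then, using $\bra{0}R_y(\beta)^\dagger R_y(\gamma)\ket{0}=\cos\frac{\gamma-\beta}{2}$,
\[
\braket{\widetilde\phi|\phi}=\sum_b|\alpha_b|^2\cos\frac{\theta(b)-\widetilde\theta(b)}{2}.
\]
The merge changes angles only on $B(x)\cup B(y)=B(z)$. There $\widetilde\theta(b)=\theta_C$, while $\theta(b)=\theta_w$ for $b\in B(w)$, $w\in\{x,y\}$. For a stripping, $y=x'$ is the virtual zero gate with $\theta_{x'}=0$. Off $B(z)$ the cosine equals $1$.

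Finally, use normalisation $\sum_b|\alpha_b|^2=1$, and the fact that $B(x)$ and $B(y)$ are disjoint since they differ in one fixed bit. Writing $\cos=1-(1-\cos)$ and grouping the sums over $B(x)$ and $B(y)$ into $P^{(G)}_x$ and $P^{(G)}_y$ gives
\[
\braket{\widetilde\phi|\phi}=1-\sum_{w\in\{x,y\}}\Bigl(1-\cos\frac{\theta_w-\theta_C}{2}\Bigr)P^{(G)}_w,
\]
which is the claimed formula. Since all quantities are real, the overlap is symmetric.
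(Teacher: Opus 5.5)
Your proposal is correct and is essentially the paper's proof: the unchanged earlier layers feed the same \(\ket{\phi^{(k)}}\) into layer \(k\), the unchanged later layers cancel by unitarity, and the block-diagonal layer \(k\) contributes \(\sum_b|\alpha_b|^2\cos\frac{\theta(b)-\widetilde\theta(b)}{2}\), which you regroup into the stated formula using normalisation and the disjointness of \(B(x)\) and \(B(y)\). You should drop your hedge about the later layers: \(\widetilde G_{>k}=G_{>k}\) holds literally as operators because the merge alters only layer \(k\), so \(G_{>k}^\dagger G_{>k}=I\) removes them whether or not later gates are controlled on the new qubit, and no extra assumption (or reinterpretation as a layer-\((k+1)\) overlap) is needed.
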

\begin{proof}
All layers after the merging layer are identical in the two circuits and cancel in the overlap. The same holds for all preceding layers, which prepare the coarse-grained state \(\ket{\phi^{(k)}}\). Hence only the merged layer contributes. For \(b\notin B(z)\), the angle is unchanged, and the contribution is \(1\). For \(b\in B(x)\) or \(b\in B(y)\), the overlap factor is
\[
\bra{0}R_y(-\theta_C)R_y(\theta_w)\ket{0}
=
\cos\!\left(\frac{\theta_w-\theta_C}{2}\right).
\]
Summing over all basis states \(b\in\{0,1\}^k\) gives the result.
\end{proof}

\subsubsection{Repeated Merges}
Prop.~\ref{prop:single-merge} is exact, but it is local: it requires the probabilities \(P_x^{(G)}\) of the \emph{current} circuit at the moment when the merge is performed. In the implementation, these probabilities are not recomputed after every accepted merge. Instead, we evaluate them once from the fixed baseline circuit and then update the overlap estimate using the clusters introduced above. This yields the estimator used in step~(6) of the algorithm.

For each active cluster \(C\), let \(\theta_x\) denote the original baseline angle associated with \(x\in C\), and let \(\theta_C\) be the current angle of the active gate representing that cluster. We define
\begin{equation*}
\label{eq:cluster_loss}
    L_C
    :=
    \sum_{x\in C}
    \left(
    1-\cos\!\left(\frac{\theta_x-\theta_C}{2}\right)
    \right)P_x ,
\end{equation*}
the overlap loss related to the cluster $C$, where $P_x := P_x^{(0)}$ denotes the probability weight precomputed from the baseline circuit. The overlap estimate used by the implementation is
\begin{equation*}
    \mathcal{F}_{\mathrm{est}}
    =
    1-\sum_{C\in\mathcal{C}_{\mathrm{act}}}L_C .
\end{equation*}
When two active clusters \(A\) and \(B\) are merged into \(C=A\cup B\), the estimator is updated as
\begin{equation*}
    \mathcal{F}_{\mathrm{est}}'
    =
    \mathcal{F}_{\mathrm{est}}+L_A+L_B-L_C.
\end{equation*}
In the case of a control-stripping move, only one old cluster is present, so
\begin{equation*}
    \mathcal{F}_{\mathrm{est}}'
    =
    \mathcal{F}_{\mathrm{est}}+L_A-L_C.
\end{equation*}

For a fixed cluster \(C\), the merged angle \(\theta_C\) minimizes \(L_C\), i.e., maximizes
\begin{equation*}
    \sum_{x\in C}P_x\cos\!\left(\frac{\theta_x-\theta_C}{2}\right).
\end{equation*}
Writing
\begin{align*}
    X_C:=\sum_{x\in C}P_x\cos\!\frac{\theta_x}{2}, \qquad
    Y_C:=\sum_{x\in C}P_x\sin\!\frac{\theta_x}{2},
\end{align*}
the optimal merged angle is
\begin{equation}\label{eq:theta_C}
    \theta_C
    =
    2\arg(X_C+iY_C).
\end{equation}
This follows from the identity 
\begin{align*}
    A\cos\alpha\hspace*{-2pt} +\hspace*{-2pt} B\sin\alpha =\sqrt{A^2\hspace*{-2pt}+\hspace*{-2pt}B^2}\cos(\alpha\hspace*{-2pt} -\hspace*{-2pt} \arg(A\hspace*{-2pt}+\hspace*{-2pt}iB)).
\end{align*}

\begin{proposition}[Exactness within one layer]
If all merges are performed within a single layer, then the estimator is exact:
\begin{equation*}
    \mathcal{F}_{\mathrm{est}}=\braket{\tilde{\psi}|\psi}.
\end{equation*}
\end{proposition}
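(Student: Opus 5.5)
The approach is to compute the overlap between the exact target state and the final approximate state directly, bypassing the chain of single-merge steps. If every merge happens in layer $k$, the final circuit $\widetilde G$ differs from the baseline circuit only in that layer. All layers before $k$ are untouched, so both circuits produce the same coarse-grained state
\[
\ket{\psi^{(k)}}=\sum_{b\in\{0,1\}^k}\alpha_b\ket{b},
\]
whose weights $|\alpha_b|^2$ are the baseline weights. The layers after $k$ are also identical in the two circuits. As in the proof of Prop.~\ref{prop:single-merge}, they cancel in the overlap, because each of them is a unitary applied to both states. What remains is
\[
\braket{\tilde\psi|\psi}=\sum_b|\alpha_b|^2\,\bra{0}R_y(-\tilde\theta_b)R_y(\theta_b)\ket{0},
\]
where $\theta_b$ is the baseline angle acting on the branch $b$ and $\tilde\theta_b$ is the angle the active circuit applies to $b$.

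The second step is to organize this sum by clusters. The final active gates in layer $k$ have patterns whose sets $B(\cdot)$ are pairwise disjoint. This holds because merges and strippings only ever replace $B(x)\cup B(y)$ by $B(z)$, and the admissibility rule for stripping ensures that no enlarged region overlaps another active gate. These sets, together with the branches that carry no active gate (where both angles are $0$), partition $\{0,1\}^k$. Each cluster $C$ is a set of baseline patterns $x$, and the corresponding sets $B(x)$ partition $B$ of the active pattern. For $b\in B(x)$ with $x\in C$, we have $\theta_b=\theta_x$ and $\tilde\theta_b=\theta_C$. Each matrix element equals $\cos\frac{\theta_x-\theta_C}{2}$, and every branch outside all clusters contributes $1$.

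Third, sum over branches using $P_x=\sum_{b\in B(x)}|\alpha_b|^2$, which are exactly the baseline weights since layers $<k$ are unchanged, together with $\sum_b|\alpha_b|^2=1$. This gives
\[
\braket{\tilde\psi|\psi}=1-\sum_{C}\sum_{x\in C}\Bigl(1-\cos\tfrac{\theta_x-\theta_C}{2}\Bigr)P_x=1-\sum_C L_C=\mathcal F_{\mathrm{est}}.
\]
It remains to check that the incremental updates $\mathcal F_{\mathrm{est}}+L_A+L_B-L_C$ (and $+L_A-L_C$ for stripping) telescope to this closed form. This follows by induction on the number of merges, since each update simply replaces the old clusters' losses with the new cluster's loss.

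The main obstacle is the bookkeeping in the second step. One has to verify that clusters remain disjoint and cover exactly the branches on which the active angle differs from the baseline. This is most delicate for strippings: the sibling pattern $x'$ is added with baseline angle $0$, and one must confirm that the baseline angle on $B(x')$ is indeed $0$. For unreachable branches this holds by convention (the weight is $0$ in any case). For reachable ones it is guaranteed by the admissibility rule that $B(x')$ contains only zero-angle rotations.
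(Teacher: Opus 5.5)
Your proposal is correct and is essentially the paper's argument: untouched earlier layers give the baseline weights $P_x$, later layers cancel, and the layer-$k$ overlap splits into per-cluster cosine terms summing to $1-\sum_C L_C$. You spell out the cluster bookkeeping that the paper's two-line proof leaves implicit; one small precision is that the final active regions need only be disjoint on the support $S_k$, since exact strippings from Sec.~\ref{sec:exactoptimization} check only reachability and can overlap on unreachable branches, which is harmless because those branches carry zero weight.
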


\begin{proof}
Under the single-layer restriction, circuits $G$ and $\widetilde{G}$ coincide on layers $1,\dots k-1$, so the coarse-grained state $\ket{\phi^{(k)}}$ entering layer $k$ equals the baseline state $\ket{\psi^{(k)}}$, giving $P^{(G)}_x = P_x$. Applying Prop.~\ref{prop:single-merge} to each active cluster and summing yields $\langle \widetilde{\psi} | \psi \rangle = 1 - \sum_C L_C = \mathcal{F}_{\mathrm{est}}$.
\end{proof}

\subsection{A Lower Bound on the overlap}

The estimator \(\mathcal{F}_{\mathrm{est}}\) is exact when all merges occur within a single layer, but in general it is not exact across different layers, since earlier merges modify the amplitudes entering later ones. We therefore introduce a lower bound on the overlap, which can be computed at the end of the algorithm.

For a prefix \(x=i_1\cdots i_k\in\{0,1\}^k\), define the amplification factor at the entrance of layer \(k\) by
\begin{equation*}\label{eq:Rk_def}
    R_x^{(k)}
    :=
    \prod_{l=1}^{k}
    \frac{
        g\!\left(\tilde\theta_{i_1\cdots i_{l-1}},\, i_l\right)
    }{
        g\!\left(\theta_{i_1\cdots i_{l-1}},\, i_l\right)
    },
\end{equation*}
where
\[
g(\theta,0)=\cos(\theta/2),
\qquad
g(\theta,1)=\sin(\theta/2),
\]
\(\theta\) denotes the baseline angle, and \(\tilde\theta\) the angle in the final active circuit. For a final active cluster \(C\) in layer \(k\), we set
\begin{equation*}\label{eq:RC_def}
    R_C:=\max_{x\in C}R_x^{(k)}.
\end{equation*}

\begin{theorem}[Lower bound]
\label{thm:rigorous_bound}
Let \(\mathcal{C}_{\mathrm{act}}\) be the set of final active clusters. Then the overlap between the target state \(\ket{\psi}\) and the final prepared state \(\ket{\tilde\psi}\) satisfies
\begin{equation*}\label{eq:rigorous_bound}
    \braket{\tilde\psi|\psi}
    \ge
    1-\sum_{C\in\mathcal{C}_{\mathrm{act}}}R_C\,L_C.
\end{equation*}
\end{theorem}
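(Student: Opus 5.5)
The approach is an exact telescoping identity for the overlap, layer by layer, followed by a crude bound of each layer's contribution by the corresponding cluster losses $L_C$. Write both states as sums over root-to-leaf paths of the preparation tree. For a prefix $x=i_1\cdots i_k$, define the baseline and approximate path amplitudes
\[
a_x:=\prod_{l=1}^{k} g\!\left(\theta_{i_1\cdots i_{l-1}},i_l\right),\qquad
\tilde a_x:=\prod_{l=1}^{k} g\!\left(\tilde\theta_{i_1\cdots i_{l-1}},i_l\right),
\]
so that $a_x^2=P_x$ and, whenever $a_x\neq 0$, $\tilde a_x=R^{(k)}_x a_x$. Set $W_k:=\sum_{x\in\{0,1\}^k}a_x\tilde a_x$. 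Then $W_0=1$, and $W_n=\braket{\tilde\psi|\psi}$ because all amplitudes are real.

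\textbf{Step 1: one-layer recursion.} The key identity is
\[
\sum_{i\in\{0,1\}}g(\theta,i)\,g(\tilde\theta,i)=\cos\frac{\theta}{2}\cos\frac{\tilde\theta}{2}+\sin\frac{\theta}{2}\sin\frac{\tilde\theta}{2}=\cos\frac{\theta-\tilde\theta}{2}.
\]
This is the same computation as in the proof of Prop.~\ref{prop:single-merge}. It gives
\[
W_{k+1}=\sum_{x\in\{0,1\}^k}a_x\tilde a_x\cos\!\frac{\theta_x-\tilde\theta_x}{2}.
\]
Telescoping then yields the exact formula
\[
1-\braket{\tilde\psi|\psi}=\sum_{k}\sum_{x\in\{0,1\}^k}a_x\tilde a_x\left(1-\cos\frac{\theta_x-\tilde\theta_x}{2}\right)=\sum_k\sum_x R^{(k)}_x P_x\left(1-\cos\frac{\theta_x-\tilde\theta_x}{2}\right).
\]
Prefixes with $P_x=0$ contribute nothing; for them $R_x^{(k)}$ may be undefined, and any convention can be used.

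\textbf{Step 2: grouping by clusters.} At layer $k$, every baseline prefix $x$ with $P_x>0$ whose angle was altered belongs to exactly one final active cluster $C$, and the final angle applied on $x$ is $\theta_C$, so $\tilde\theta_x=\theta_C$. Prefixes not absorbed into any altered cluster have $\tilde\theta_x=\theta_x$ and contribute zero. Virtual zero-angle members of a cluster sit on branches with $P_x=0$ if stripping was exact. In the approximate case they are genuine baseline patterns with angle $0$, and they enter $L_C$ with their own $P_x$. Either way, the contribution of cluster $C$ is $\sum_{x\in C}R_x^{(k)}P_x\bigl(1-\cos\frac{\theta_x-\theta_C}{2}\bigr)$. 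Provided every summand is nonnegative, this is at most $R_C L_C$, and summing over $C\in\mathcal C_{\mathrm{act}}$ gives the theorem.

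\textbf{Main obstacle: the sign of $R_x^{(k)}$.} The last step needs $R_x^{(k)}\ge 0$, i.e.\ $g(\tilde\theta,i)\ge 0$ along every path. This is the point I expect to require care. For positive real targets, every baseline angle from \eqref{eq:angles_phases} lies in $[0,\pi]$. Each merged angle \eqref{eq:theta_C} is $2\arg(X_C+iY_C)$ with $X_C,Y_C\ge 0$, hence also lies in $[0,\pi]$. Exact strippings keep the angle unchanged. So all half-angles lie in $[0,\pi/2]$ and all $g$ values are nonnegative. One must also check the degenerate case $X_C=Y_C=0$, which only occurs when $C$ carries no weight, so its angle is irrelevant. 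With nonnegativity established, each term $(1-\cos)\ge 0$ is multiplied by $R_x^{(k)}\le R_C$, and the bound follows. In fact Step~1 provides an exact expression, so the bound is tight up to replacing each $R_x^{(k)}$ by the cluster maximum.
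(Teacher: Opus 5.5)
Your proof is correct and is essentially the paper's argument: the paper's backward induction on the subtree overlaps $H_x$ unrolls to exactly your telescoped identity, resting on the same one-step relation $\cos\frac{\theta}{2}\cos\frac{\tilde\theta}{2}+\sin\frac{\theta}{2}\sin\frac{\tilde\theta}{2}=\cos\frac{\theta-\tilde\theta}{2}$ and the same termwise bound $R_x^{(k)}\le R_C$, and your treatment of prefixes with $P_x=0$ is, if anything, cleaner than the paper's. The ``main obstacle'' you flag is not actually needed, since $R_x^{(k)}P_x\bigl(1-\cos\frac{\theta_x-\theta_C}{2}\bigr)\le R_C P_x\bigl(1-\cos\frac{\theta_x-\theta_C}{2}\bigr)$ follows from $R_x^{(k)}\le R_C$ and $P_x\bigl(1-\cos\frac{\theta_x-\theta_C}{2}\bigr)\ge 0$ alone, regardless of the sign of $R_x^{(k)}$.
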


\begin{proof}
For a concrete prefix \(x\in\{0,1\}^k\), let
\[
H_x:=\sum_{b\in\{0,1\}^{n-k}}\psi_{xb}\tilde\psi_{xb}
\]
be the exact overlap contribution of the subtree rooted at \(x\). In particular, $
H_{\varnothing}=\braket{\tilde\psi|\psi}$.

We prove by backward induction on the remaining depth that
\begin{equation}\label{eq:Hx_claim}
    H_x
    \ge
    P_x^{(k)}R_x^{(k)}
    -
    \sum_{C\subseteq \mathcal{T}(x)} R_C\,L_C(x),
\end{equation}
where $\mathcal{T}$ is the subtree starting from $x$, and \(L_C(x)\) denotes the contribution to \(L_C\) coming from source prefixes of \(C\) that extend \(x\).

If \(x\) is a leaf, then \(H_x=\psi_x\tilde\psi_x=P_x^{(n)}R_x^{(n)}\), and there are no clusters below \(x\), so \eqref{eq:Hx_claim} holds.

Otherwise the descendants of \(x\) split into the two subtrees rooted at \(x0\) and \(x1\),
$
H_x=H_{x0}+H_{x1}$.
Applying the induction hypothesis to \(x0\) and \(x1\), we obtain
\begin{align*}
H_x
&\ge
P_{x0}^{(k+1)}R_{x0}^{(k+1)}
+
P_{x1}^{(k+1)}R_{x1}^{(k+1)}
\nonumber\\
&
-
\sum_{C\subseteq \mathcal{T}(x0)} R_C\,L_C(x0)
-
\sum_{C\subseteq \mathcal{T}(x1)} R_C\,L_C(x1).
\end{align*}
By the definitions of \(P_x^{(k)}\) and \(R_x^{(k)}\),
\begin{align*}
P_{x0}^{(k+1)}R_{x0}^{(k+1)}
&=
P_x^{(k)}R_x^{(k)}
\cos\!\frac{\theta_x}{2}
\cos\!\frac{\tilde\theta_x}{2},\\
P_{x1}^{(k+1)}R_{x1}^{(k+1)}
&=
P_x^{(k)}R_x^{(k)}
\sin\!\frac{\theta_x}{2}
\sin\!\frac{\tilde\theta_x}{2}.
\end{align*}
Therefore,
\begin{align*}
H_x
&\ge
P_x^{(k)}R_x^{(k)}
-
P_x^{(k)}R_x^{(k)}
\left(
1-\cos\!\frac{\theta_x-\tilde\theta_x}{2}
\right)
\nonumber\\
&
-
\sum_{C\subseteq \mathcal{T}(x0)} R_C\,L_C(x0)
-
\sum_{C\subseteq \mathcal{T}(x1)} R_C\,L_C(x1).
\end{align*}

If \(x\) is unchanged in the final circuit, then the cosine term is zero and \eqref{eq:Hx_claim} follows immediately. Otherwise, \(x\) belongs to a unique final active cluster \(C_x\) with active angle \(\theta_{C_x}=\tilde\theta_x\). The corresponding contribution to the cluster loss is
\[
L_{C_x}(x)
=
P_x^{(k)}
\left(
1-\cos\!\left(\frac{\theta_x-\theta_{C_x}}{2}\right)
\right).
\]
Since \(R_x^{(k)}\le R_{C_x}\), we have
\[
P_x^{(k)}R_x^{(k)}
\left(
1-\cos\!\left(\frac{\theta_x-\tilde\theta_x}{2}\right)
\right)
\le
R_{C_x}\,L_{C_x}(x).
\]
Substituting this into the previous inequality gives \eqref{eq:Hx_claim} for \(x\).

Finally, taking \(x=\varnothing\), we have \(H_{\varnothing}=\braket{\tilde\psi|\psi}\), \(P_{\varnothing}^{(0)}=1\), \(R_{\varnothing}^{(0)}=1\), and \(L_C(\varnothing)=L_C\). Hence \eqref{eq:Hx_claim} becomes
\[
\braket{\tilde\psi|\psi}
\ge
1-\sum_{C\in\mathcal{C}_{\mathrm{act}}}R_C\,L_C,
\]
which proves the theorem.
\end{proof}

\begin{figure}
    \centering
    \includegraphics[width=1.\linewidth]{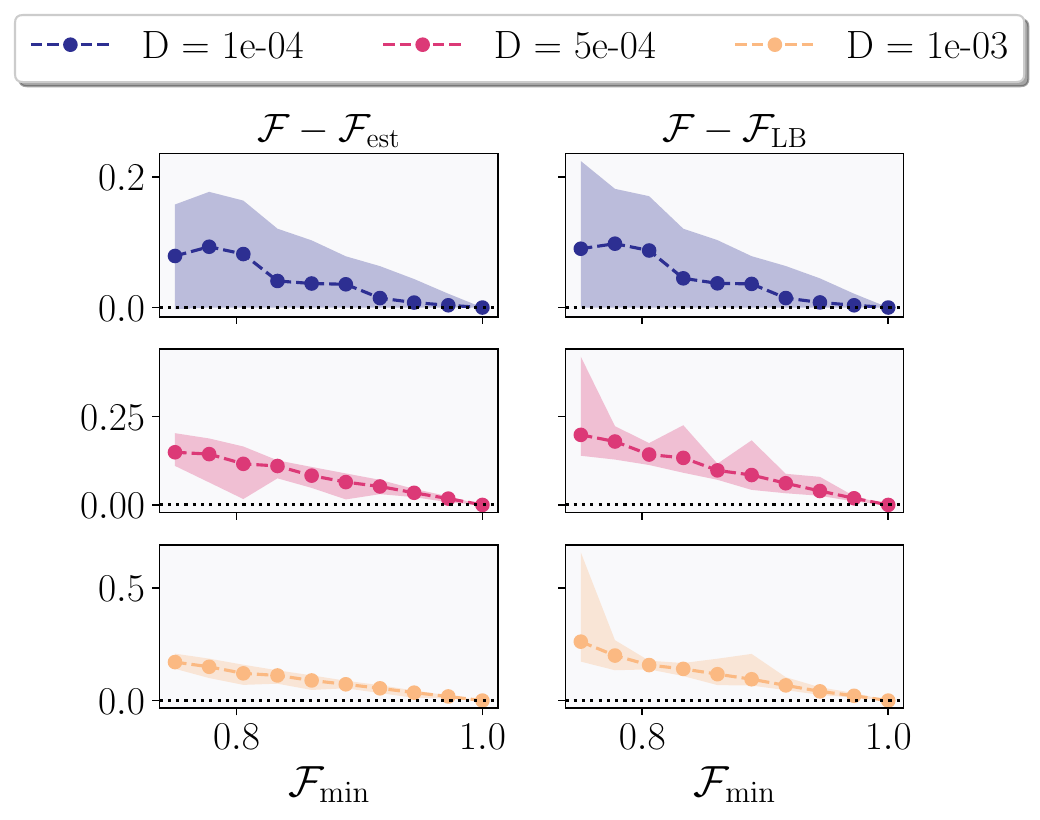}
    \caption{
    Performance assessment of the overlap estimation. 
    The number of qubits is fixed to $15$, and the number of intervals to $M = 20$, while the sparsity percentage $D$ is varied. 
    The two columns show, respectively, 
\(\mathcal{F}_{\mathrm{est}}-\mathcal{F}\) and 
\(\mathcal{F}-\mathcal{F}_{\mathrm{LB}}\) as functions of the minimum allowed overlap \(\mathcal{F}_{\min}\), where $\mathcal{F}_{\mathrm{LB}}$ is the lower bound  of Thm.~\ref{thm:rigorous_bound}, and $\mathcal{F}$ is the true overlap.
    The curves represent averages over 20 random real instances, while the shaded areas show the minimum and maximum values observed across repetitions.
    }
    \label{fig:square-root-fidelity-estimate}
\end{figure}

\begin{figure}
    % change font size
    % write "# CNOTs approx / # CNOTs
    % change to D
    \centering
    \includegraphics[width=\linewidth]{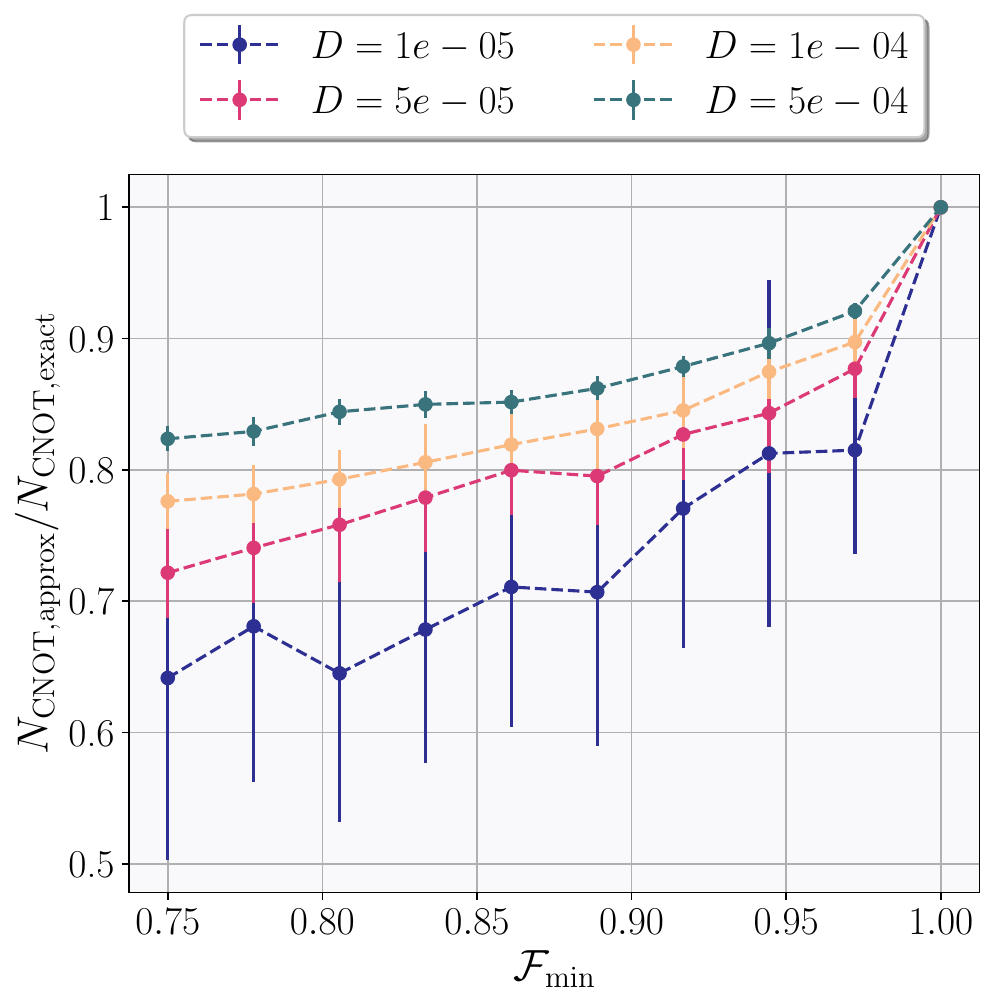}
    \caption{
    Comparison between the exact and approximate algorithms. 
    The number of qubits is fixed to $n = 20$, and the number of intervals to $M=20$, while the sparsity percentage $D$ is varied.  
    The plot shows the ratio of the number of CNOTs in the approximate algorithm to that in the exact scheme. 
    Each data point corresponds to an average over 20 repetitions, with 10 points plotted per curve.
    }
    \label{fig:approx_vs_hybrid}
\end{figure}
\begin{figure}
    \centering
    \includegraphics[width=\linewidth]{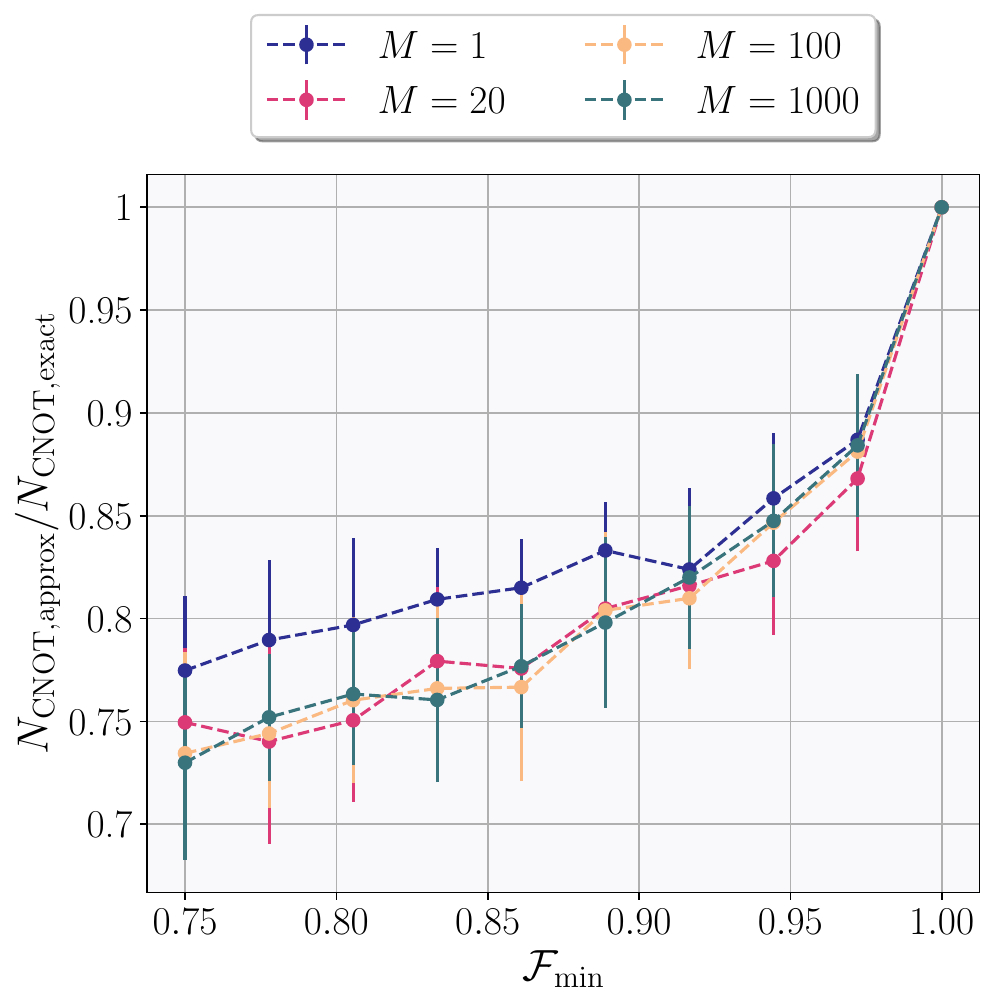}
    \caption{Comparison between the exact and the approximate algorithm for different values of $M$. The number of qubits is fixed to $n = 20$, and the sparsity percentage to $D = 5\cdot 10^{-5}$. The plot shows the ratio of the number of CNOTs in the approximate algorithm to that in the exact scheme. 
    Each data point corresponds to an average over 20 repetitions, with 10 points plotted per curve.}
    \label{fig:ratio_M_sweep}
\end{figure}

\subsection{Classical complexity}
At the beginning, one applies the improved exact optimization of Sec.~\ref{sec:exactoptimization}, which has complexity
$
\mathcal{O}(d^2n^3 + dn^2\log d).
$Then, the approximate routine is performed.

The nonzero leaves of the baseline circuit and their prefix-probability tables are precomputed once. Since there are at most $d$ supported leaves and $n$ depths, this preprocessing costs $\mathcal{O}(dn)$.

At a given layer $k$, there are at most $d$ active gates. For each gate, there are at most $k$ neighbor candidates and at most $k$ control-removal candidates, so the total number of candidates in one pass is bounded by
$
\sum_{k=1}^{n}\mathcal{O}(dk)=\mathcal{O}(dn^2).
$

For each candidate, the implementation constructs the merged cluster by scanning only the supported concrete prefixes at that depth together with the active keys in the corresponding layer. In the sparse regime, both of these have size at most $d$, so the cost per candidate is $\mathcal{O}(d)$. Hence, generating and evaluating all candidates in one pass costs $
\mathcal{O}(d^2n^2)$.
Sorting the candidate list adds
$
\mathcal{O}\!\bigl(dn^2\log(dn^2)\bigr),
$
which is lower order than the candidate-evaluation term.

Let $T$ be the number of greedy passes, which is given by the $M$ threshold steps plus the final saturation passes at the end. Since every successful merge reduces the number of active gates, and there are at most $\mathcal{O}(dn)$ active gates overall, one has the worst-case bound
$
T=\mathcal{O}(M+dn).
$
Therefore the approximate stage has worst-case complexity
$
\mathcal{O}\!\bigl((M+dn)\,d^2n^2\bigr).
$

Including the preprocessing, the total complexity is
$ \mathcal{O}\!\bigl(d^2n^3 + dn^2\log d + (M+dn)\,d^2n^2\bigr). $

In practice, this estimate is often pessimistic, since many candidates are discarded immediately by the overlap threshold.

If one also evaluates the rigorous lower bound of Thm.~\ref{thm:rigorous_bound} at the end of the approximate routine, this adds only a lower-order postprocessing cost. Indeed, the implementation already maintains the baseline support, the active clusters, and their losses during the greedy procedure. The remaining task is to compute the amplification factors \(R_x^{(k)}\) for all supported leaves and all depths, and then to extract the corresponding maxima \(R_C\) for the final active clusters.

Since there are at most \(d\) supported leaves and \(n\) depths, building the full amplification table costs \(\mathcal{O}(d^2 n)\): for each leaf and depth, one has to identify the matching baseline and active gates in layers of size at most \(d\). The subsequent scan over the final active clusters also costs \(\mathcal{O}(d^2 n)\), because the total number of concrete source patterns stored across all clusters is \(\mathcal{O}(dn)\), and for each such pattern one compares against at most \(d\) supported leaves. Therefore, the total cost of computing the lower bound is
$
\mathcal{O}(d^2 n),
$
which does not change the overall asymptotic scaling.

\subsection{Numerical Results}

First, we aim to assess the performance of our overlap estimation on practical numerical instances. Let us indicate the overlap between the target state $\ket{\psi}$ and the prepared state $\ket{\tilde{\psi}}$  by $\mathcal{F}$. Fig.~\ref{fig:square-root-fidelity-estimate} shows, as a function of $\mathcal{F}_{min}$, the difference between our estimation $\mathcal{F}_{est}$ and $\mathcal{F}$ on the first column, and between $\mathcal{F}$ and its lower bound of Thm. \ref{thm:rigorous_bound} on the second column, for different values of the sparsity percentage $D := d / 2^n$. \newline
The numerical results confirm that the estimate closely tracks the true overlap across random instances and behaves almost always as a lower bound. Since it yields a more accurate estimate of the overlap than the lower bound and has lower classical complexity, it justifies its use in the approximate algorithm. We also note that the estimation is more accurate for lower values of $\mathcal{F}_{min}$ and lower sparsity, which supports our implementation.

We then compared the approximate algorithm with the exact approach in Fig.~\ref{fig:approx_vs_hybrid}, plotting the ratio of their relative CNOT counts and averaging over random real instances. \newline
The numerical results show that allowing a small, controlled approximation further reduces the number of gates by about $20-30\%$. The improvement is most pronounced for sparser target states, where, for certain instances, it reaches $50\%$.

Finally, we studied the dependence on the number of intervals $M$. Fig. \ref{fig:ratio_M_sweep} shows the ratio of the number of CNOTs in the approximate algorithm to that in the exact scheme, with a fixed sparsity percentage, and varying $M$. We observe that beyond $M=20$, increasing this parameter doesn't yield a significant difference in the reduction of CNOTs.

\section{Conclusions}

We studied the Grover-Rudolph state-preparation algorithm in the sparse regime, with the goal of reducing the circuit size. First, we introduced an exact optimization procedure that combines the sparse merging strategy with uniformly controlled rotations, selecting the cheaper implementation at each layer. 
This shows an improvement for sparse states, reaching a $90\%$ reduction compared to the unoptimized Grover--Rudolph algorithm, when the sparsity percentage is $10^{-5}$.
We then proposed an approximate merging strategy that selects merges while enforcing a minimum target overlap. We derived an explicit formula for the reduction in overlap that can be computed classically starting from the rotation angles. This yielded a further reduction of $20-30\%$ in the number of CNOTs compared to the exact approach. 

There are several natural directions for future work. The most immediate one is to extend the analysis beyond real states and treat the fully complex case, where approximate merging must also account for phase gates. It would also be interesting to combine our approach with more advanced decomposition methods, in particular ancilla-assisted constructions, and to study whether similar approximation ideas can improve fault-tolerant cost measures such as Toffoli count. Furthermore, there could be other kinds of possible merges that one could allow, for example, those that we called regional merges. Thus, it would be interesting to generalize our estimation to allow more operations on the gates.

\section{Code and data availability}
The code and data are available at \href{https://github.com/Damuna/approx-grover-rudolph}{approx-grover-rudolph}. All the results were obtained using Python.

\section{Acknowledgments}
This work was supported by the QDFG under Germany’s Excellence Strategy - EXC-2123 QuantumFrontiers-2 - 390837967 and SFB 1227 (DQ-mat), the Quantum Valley Lower Saxony, and the BMBF projects ATIQ, SEQUIN, Quics, and CBQD. Helpful correspondence and discussions with Markus Frembs and Tobias J. Osborne are gratefully acknowledged.

% IF YOU WANT TO CHECK OLD NOTES
%\input{old.tex}

\bibliographystyle{unsrt}
\bibliography{bibliography}
\end{document}